
\documentclass[12pt]{article}%
\usepackage{amsmath}
\usepackage{amsfonts}
\usepackage{amssymb}
\usepackage{graphicx}
\usepackage{cite}%
\setcounter{MaxMatrixCols}{30}
\providecommand{\U}[1]{\protect\rule{.1in}{.1in}}

\providecommand{\U}[1]{\protect \rule{.1in}{.1in}}
\providecommand{\U}[1]{\protect \rule{.1in}{.1in}}
\newtheorem{theorem}{Theorem}

\newtheorem{proposition}[theorem]{Proposition}

\newenvironment{proof}[1][Proof]{\noindent \textbf{#1.} }{\  \rule{0.5em}{0.5em}}
\setlength{\textheight}{235 mm}
\setlength{\topmargin}{-10 mm} 
\begin{document}

\title{Coupled quintessence with double exponential potentials}
\author{Koralia Tzanni\thanks{\texttt{tzanni@aegean.gr}} and John
Miritzis\thanks{\texttt{imyr@aegean.gr}},\\Department of Marine Sciences, University of the Aegean,\\University Hill, Mytilene 81100, Greece}
\maketitle

\begin{abstract}
We study flat Friedmann-Robertson-Walker (FRW) models with a perfect fluid
matter source and a scalar field non minimally coupled to matter having a
double exponential potential. It is shown that the scalar field almost always
diverges to infinity. Under conditions on the parameter space, we show that
the model is able to give an acceptable cosmological history of our universe,
that is, a transient matter era followed by an accelerating future attractor.
It is found that only a very weak coupling can lead to viable cosmology. We
study in the Einstein frame, the cosmological viability of the asymptotic form
of a class of $f\left(  R\right)  $ theories predicting acceleration. The role
of the coupling constant is briefly discussed.

\end{abstract}

\section{Introduction}

The standard inflationary idea requires that there be a period of slow-roll
evolution of a scalar field (the inflaton) during which, its potential energy
drives the universe in a quasi-exponential expansion. Besides a cosmological
constant, a nearly massless scalar field (quintessence) provides the simplest
mechanism to obtain accelerated expansion of the universe within General
Relativity. Therefore, scalar fields play a prominent role in the construction
of cosmological scenarios aiming to describe the evolution of the early and
the present universe.

Earlier investigations in scalar-field cosmology assumed a minimal coupling of
the scalar field, (see for example \cite{clw} and \cite{hcw} for models
containing both a perfect fluid of ordinary matter and a scalar field with an
exponential potential, the so-called \textquotedblleft
scaling\textquotedblright\ cosmologies; \cite{fay} and references therein for
scalar-tensor theories with exponential potential, \cite{hh} for a phase-space
analysis of the qualitative evolution of cosmological models with a scalar
field with positive or negative exponential potentials). Inclusion of non
minimal coupling increases the mathematical difficulty of the analysis, yet it
is important to consider non minimal coupling in scalar field cosmology
\cite{fuma}. As is stressed in \cite{fara1}, the introduction of non minimal
coupling is not a matter of taste; a large number of physical theories predict
the presence of a scalar field coupled to matter and we mention a few
important examples:

In the string effective action, the dilaton field is generally coupled to
matter in the Einstein frame \cite{gasp}. In scalar-tensor theories of gravity
\cite{fuma,lsf}, the action in the Einstein frame takes the form%
\begin{equation}
S=\int d^{4}x\sqrt{-g}\left\{  R-\left[  \left(  \partial\phi\right)
^{2}+2V\left(  \phi\right)  \right]  +2\chi^{-2}L_{\mathrm{m}}\left(
\widetilde{g}_{\mu\nu},\Psi\right)  \right\}  , \label{scte}%
\end{equation}
with%
\[
\widetilde{g}_{\mu\nu}=\chi^{-1}g_{\mu\nu},
\]
where $\chi=\chi\left(  \phi\right)  $ is the coupling function and matter
fields are collectively denoted by $\Psi$. In particular, for higher order
gravity (HOG) theories derived from Lagrangians of the form
\begin{equation}
f\left(  \widetilde{R}\right)  +2L_{\mathrm{m}}\left(  \widetilde{g}_{\mu\nu
},\Psi\right)  , \label{lagr}%
\end{equation}
it is well known that under the conformal transformation, $g_{\mu\nu
}=f^{\prime}\left(  \widetilde{R}\right)  \widetilde{g}_{\mu\nu}$, the field
equations reduce to the Einstein field equations with a scalar field $\phi$ as
an additional matter source. The conformal equivalence can be formally
obtained by conformally transforming the Lagrangian (\ref{lagr}) and the
resulting action becomes \cite{bbpst},
\[
S=\int d^{4}x\sqrt{-g}\left\{  R-\left[  \left(  \partial\phi\right)
^{2}+2V\left(  \phi\right)  \right]  +2e^{-2\sqrt{2/3}\phi}L_{\mathrm{m}%
}\left(  e^{-\sqrt{2/3}\phi}g_{\mu\nu},\Psi\right)  \right\}  .
\]
Therefore the Lagrangian of HOG theories is a particular case of the general
scalar-tensor Lagrangian with $\chi\left(  \phi\right)  =e^{\sqrt{2/3}\phi}$,
in equation (\ref{scte}). Non minimally coupling occurs also in models of
chameleon gravity \cite{khwe}, \cite{wate},
\[
S=\int d^{4}x\sqrt{-g}\left\{  R-\left[  \left(  \partial\phi\right)
^{2}+2V\left(  \phi\right)  \right]  +2L_{\mathrm{m}}\left(  \widetilde
{g}_{\mu\nu},\Psi\right)  \right\}  ,
\]
with%
\[
\widetilde{g}_{\mu\nu}=e^{2\beta\phi}g_{\mu\nu},
\]
where $\beta$ is a coupling constant. The same form of coupling has been
proposed in models of the so called coupled quintessence \cite{amen1} (see
also \cite{psc} for more general couplings and \cite{tmh} for a generalization
involving a scalar field coupled both to matter and a vector field).

Variation of the action (\ref{scte}) with respect to the metric $g$ yields the
field equations,
\begin{equation}
G_{\mu\nu}=T_{\mu\nu}\left(  g,\phi\right)  +T_{\mu\nu}^{\mathrm{m}}\left(
g,\Psi\right)  , \label{confm}%
\end{equation}
where $T_{\mu\nu}^{\mathrm{m}}$ is the matter energy momentum tensor. The
Bianchi identities imply that the total energy-momentum tensor is conserved
and therefore there is an energy exchange between the scalar field and
ordinary matter. In all the above examples, the conservation of their sum is
provided by the equations (compare to \cite{amen1}),
\[
\nabla^{\mu}T_{\mu\nu}^{\mathrm{m}}\left(  g,\Psi\right)  =QT^{\mathrm{m}%
}\nabla_{\nu}\phi,\ \ \ \ \nabla^{\mu}T_{\mu\nu}\left(  g,\phi\right)
=-QT^{\mathrm{m}}\nabla_{\nu}\phi,
\]
where $Q:=d\ln\chi/d\phi,$ depends in general on $\phi$ and $T^{\mathrm{m}}$
is the trace of the matter energy-momentum tensor, i.e., $T^{\mathrm{m}%
}=g^{\mu\nu}T_{\mu\nu}^{\mathrm{m}}\left(  g,\Psi\right)  $. Variation of $S$
with respect to $\phi$ yields the equation of motion of the scalar field,%
\begin{equation}
\square\phi-\frac{dV}{d\phi}=-QT^{\mathrm{m}}. \label{emsf}%
\end{equation}

In this paper we study the late time evolution of initially expanding flat FRW
models, with a scalar field coupled to matter and having a potential of the
form
\begin{equation}
V\left(  \phi\right)  =V_{1}e^{-\alpha\phi}+V_{2}e^{-\beta\phi},\label{pote}%
\end{equation}
where $\alpha,\beta$ are positive constants and $V_{1},V_{2}$ are constants of
arbitrary sign. Without loss of generality, we assume $0<\alpha<\beta.$ For
$0<\alpha=\beta$ the case reduces to a single exponential potential. We also
assume that the coupling coefficient is a constant, of order $Q\lesssim1$. The
double exponential potential is usually the asymptotic form of other
potentials. For example in Kaluza-Klein theories with $d$ extra dimensions
reformulated in the Einstein frame, $\alpha\ $and $\beta$ are $\sqrt
{2d/\left(  d+2\right)  }$ and $\sqrt{2\left(  d+2\right)  /d}$ respectively,
\cite{fuma}. The physical reason for the choice (\ref{pote}), is that in
quintessence models, the dark energy is the energy of a slowly varying scalar
field $\phi$ with equation of state $p_{\phi}=w\rho_{\phi},$ $w\simeq-1$. In
most of the models of dark energy, it is assumed that the cosmological
constant is zero and the potential energy, $V\left(  \phi\right)  ,$ of the
scalar field driving the present stage of acceleration, slowly decreases and
eventually vanishes as the field approaches the value $\phi=\infty$,
\cite{kali}. In this case, after a transient accelerating stage, the speed of
expansion of the universe decreases and the universe reaches Minkowski regime.
Double exponential potentials of the form (\ref{pote}) were investigated in
\cite{bcn,ss}. Solutions were obtained in \cite{glq} with the ansatz
$\dot{\phi}=\lambda H$; see also \cite{leon} for more general couplings. A
scalar field with a double exponential potential without coupling to matter
was investigated in \cite{lizs}. For exact solutions of a scalar field non
coupled to dust with single and double exponential potentials see \cite{rspc}.
Quintessence cosmologies of double exponential potentials in the absence of
matter were studied in \cite{jms} with the techniques of phase space analysis.
Coupled quintessence field with a double exponential potential and galileon
like correction was considered in \cite{aghs}.

The plan of the paper is as follows. In the next Section we write the field
equations for flat FRW models as a constrained four-dimensional dynamical
system. Assuming an initially expanding universe, we show that for potentials
(\ref{pote}) the scalar field almost always diverges to plus or minus infinity
as $t\rightarrow\infty,$ depending on the signs of $V_{1},V_{2}.$ Using
expansion-normalized variables, the system is written as a polynomial
three-dimensional system. In Section 3 we study the equilibrium points and
analyze the structure of the solutions. It is shown that under conditions on
the parameter space, the model is able to give an acceptable cosmological
history of our universe: a transient matter era followed by an accelerating
future attractor. In particular, if we assume that ordinary matter satisfies
plausible energy conditions, i.e., $\gamma\gtrsim1$, the scale factor during
the matter era evolves approximately as $a\sim t^{2/3}$, provided that the
coupling constant, $Q,$ takes very small values. In Section 4 we examine the
asymptotic form of a popular class of $f\left(  R\right)  $ theories
predicting acceleration; in the Einstein frame this theory is equivalent to a
scalar field with a double exponential potential and we discuss its
cosmological viability. Section 5 is a brief discussion on the acceptable
range of the coupling constant.

\section{Coupled scalar field model}

For homogeneous and isotropic flat spacetimes the field equations
(\ref{confm}) and (\ref{emsf}), (ordinary matter is described by a perfect
fluid with equation of state $p=(\gamma-1)\rho,$ where $0<\gamma<2$), reduce
to the Friedmann equation,
\begin{equation}
3H^{2}=\rho+\frac{1}{2}\dot{\phi}^{2}+V\left(  \phi\right)  ,\label{frie}%
\end{equation}
the Raychaudhuri equation,
\begin{equation}
\dot{H}=-\frac{1}{2}\dot{\phi}^{2}-\frac{\gamma}{2}\rho,\label{ray}%
\end{equation}
the equation of motion of the scalar field,
\begin{equation}
\ddot{\phi}+3H\dot{\phi}+V^{\prime}\left(  \phi\right)  =\frac{4-3\gamma}%
{2}Q\rho,\label{ems}%
\end{equation}
and the conservation equation,
\begin{equation}
\dot{\rho}+3\gamma\rho H=-\frac{4-3\gamma}{2}Q\rho\dot{\phi}.\label{conss}%
\end{equation}
We adopt the metric and curvature conventions of \cite{wael}. $a\left(
t\right)  $ is the scale factor, an overdot denotes differentiation with
respect to time $t,$ $H=\dot{a}/a$ and units have been chosen so that
$c=1=8\pi G.$ Here $V\left(  \phi\right)  $ is the potential energy of the
scalar field and $V^{\prime}\left(  \phi\right)  =dV/d\phi.$ Interaction terms
between the two matter components of the form $-\alpha\rho\dot{\phi}$ as in
(\ref{conss}) with a simple exponential potential, were firstly considered in
\cite{bico} (see also \cite{bclm}). Although there is an energy exchange
between the fluid and the scalar field, it is easy to see that the set,
$\rho>0,$ is invariant under the flow of (\ref{ray})-(\ref{conss}), therefore
$\rho$ is nonzero if initially $\rho\left(  t_{0}\right)  $ is nonzero; this
trivial physical demand is not satisfied if one assumes arbitrary interaction
terms, cf. \cite{miri3}. 

\begin{figure}[th]
\begin{center}
\includegraphics{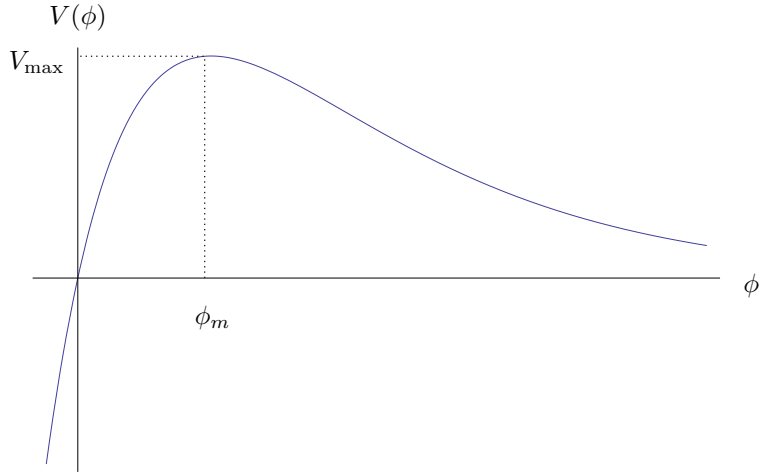}
\end{center}
\caption{Potentials (\ref{pote}) with $V_{1}>0,V_{2}<0$ have a local maximum
at some $\phi_{m}$ and diverge to minus infinity as $\phi\rightarrow-\infty$.
In this figure, $V_{2}=-V_{1}<0$. }%
\label{fig1}%
\end{figure}

As is explained in the last paragraph of the Appendix, the physically
interesting cases are $V_{1},V_{2}>0$ or $V_{1}>0,~V_{2}<0$. The dynamical
system (\ref{ray})-(\ref{conss}) has for $V_{1}>0,~V_{2}<0,$ only one finite
equilibrium point, $\left(  \phi=\phi_{m},\dot{\phi}=0,\rho=0,H=\sqrt{V_{\max
}/3}\right)  ,$ see Figure 1. It represents de Sitter solutions and is easy to
see that it is unstable. It is known that for potentials having a maximum, the
field near the top of the potential corresponds to the tachyonic (unstable)
mode with negative mass squared \cite{linde,kali,kali1}. The other asymptotic
states of the system correspond to the points at infinity, $\phi\rightarrow
\pm\infty.$

For potentials (\ref{pote}) with $V_{1},V_{2}>0,$ it can be shown the global
result that, for expanding flat models, $\phi\rightarrow\infty$ as
$t\rightarrow\infty.$ In fact, the following slightly stronger result holds,
which generalizes Proposition 4 in \cite{miri2}.

\begin{proposition}
\label{prop1}Let $V$ be a potential function with the following properties: 1.
$V$ is non-negative. 2. $V^{\prime}$ is continuous and $V^{\prime}\left(
\phi\right)  <0$. 3. If $A\subseteq\mathbb{R}$ is such that $V$ is bounded on
$A$, then $V^{\prime}$ is bounded on $A$. Then $\lim_{t\rightarrow+\infty}%
\dot{\phi}=0=\lim_{t\rightarrow+\infty}\rho,$ and $\lim_{t\rightarrow+\infty
}\phi=+\infty$.
\end{proposition}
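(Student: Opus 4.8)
The plan is to exploit the Friedmann and Raychaudhuri equations to extract monotonicity, then combine it with the equation of motion to kill $\dot\phi$ and $\rho$ asymptotically, and finally argue $\phi$ cannot stay bounded.

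First I would set $E:=\tfrac12\dot\phi^{2}+V(\phi)+\rho = 3H^{2}\ge 0$. From \eqref{ray} we have $\dot H = -\tfrac12\dot\phi^{2}-\tfrac{\gamma}{2}\rho \le 0$, so $H$ is non-increasing; since $H\ge 0$ from \eqref{frie} (we are in the expanding branch and the set $H\ge 0$ is invariant because $\dot H\le 0$ forces $H$ to decrease toward, but the constraint keeps it nonnegative — more carefully, $H^{2}\ge 0$ always and one checks $H=0$ can only be reached in infinite time, or one restricts to the invariant expanding region), $H$ converges to some limit $H_\infty\ge 0$ as $t\to\infty$. Consequently $\dot H\to 0$ along a sequence, and in fact $\int_{t_0}^{\infty}(\tfrac12\dot\phi^{2}+\tfrac{\gamma}{2}\rho)\,dt = H(t_0)-H_\infty<\infty$. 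This integrability, together with uniform continuity of the integrand (which follows once one bounds the second derivatives using the field equations and hypothesis 3), yields $\dot\phi\to 0$ and $\rho\to 0$ by a Barbalat-type argument. Here hypothesis 3 is what guarantees $V'(\phi)$ stays bounded on the (bounded, by \eqref{frie}) range of $V$-values, so $\ddot\phi$ is controlled.

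Next I would show $\phi\to+\infty$. Suppose not; then $\liminf_{t\to\infty}\phi=:\phi_*$ is finite (it cannot be $-\infty$ while also... actually $\phi$ could oscillate, so I argue via $\limsup$/$\liminf$). Since $\dot\phi\to 0$ and $\rho\to 0$, equation \eqref{ems} gives $\ddot\phi = -3H\dot\phi - V'(\phi) + \tfrac{4-3\gamma}{2}Q\rho \to -\lim V'(\phi)$ along any convergent subsequence of $\phi$; but hypothesis 2 says $V'<0$ everywhere. If $\phi$ stays in a bounded set $A$, then by continuity and hypothesis 2 there is $\delta>0$ with $V'(\phi)\le -\delta$ on the closure of that set, whence $\ddot\phi \ge \delta/2 >0$ for all large $t$ — contradicting $\dot\phi\to 0$ (an eventually convex function with derivative tending to zero is impossible unless it is eventually... it would force $\dot\phi\to+\infty$). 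Therefore $\phi$ is unbounded above; combined with $\dot\phi\to 0$ and a similar local-convexity argument ruling out $\phi$ returning to any bounded region infinitely often (each excursion downward would again trigger $\ddot\phi>0$), one concludes $\phi\to+\infty$ monotonically in the tail.

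The main obstacle is making the Barbalat-type step rigorous: one needs uniform continuity of $\dot\phi^{2}$ and $\rho$ on $[t_0,\infty)$, which requires a priori bounds on $\ddot\phi$ and $\dot\rho$, and these in turn require that $\phi$ not escape to regions where $V'$ blows up — exactly the content of hypothesis 3. So the logical order matters: one first gets $H$ bounded hence $E$ bounded hence $V(\phi)$ bounded (as $V\ge 0$), then invokes hypothesis 3 to bound $V'(\phi)$, then bounds $\ddot\phi$ via \eqref{ems} and $\dot\rho$ via \eqref{conss}, and only then runs Barbalat to get $\dot\phi,\rho\to 0$. The final $\phi\to+\infty$ conclusion is then the comparatively easy convexity argument sketched above. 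I would also remark that this is where the generalization over Proposition 4 of \cite{miri2} lies: the coupling term $\tfrac{4-3\gamma}{2}Q\rho$ is harmless precisely because $\rho\to 0$, so it does not affect the asymptotic sign of $\ddot\phi$.
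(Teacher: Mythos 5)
Your first half coincides, in essence, with the paper's own proof: $H$ is non-negative and non-increasing, hence converges, which gives $\int_{t_0}^{\infty}(\dot\phi^{2}+\gamma\rho)\,dt<\infty$; then a Barbalat-type step, using hypothesis 3 (with $V(\phi(t))\le 3H(t_0)^{2}$ from \eqref{frie}) to control $V'(\phi)$ and hence the derivative of $\dot\phi^{2}+\gamma\rho$, yields $\dot\phi\to0$ and $\rho\to0$. The paper does exactly this, except that it only bounds $\frac{d}{dt}(\dot\phi^{2}+\gamma\rho)$ from above, which already suffices for an integrable non-negative function to tend to zero; your detour through uniform continuity is fine but not needed. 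Your hesitation about the sign of $H$ is also unnecessary: hypotheses 1 and 2 force $V>0$ everywhere (if $V$ vanished somewhere it would become negative beyond that point), so $3H^{2}\ge V>0$, $H$ never vanishes and therefore keeps its initial positive sign.

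The genuine gap is in your last step. From ``$\phi$ cannot remain in a bounded set'' you conclude ``$\phi$ is unbounded above,'' but the alternative $\phi\to-\infty$ (or unbounded-below oscillations) is not excluded by what you wrote: your contradiction needs $V'\le-\delta<0$ on a \emph{compact} set, and hypotheses 1--3 do not provide this on $(-\infty,M]$, since they allow $V'\to0$ as $\phi\to-\infty$ (e.g. $V(\phi)=2-\tanh\phi$ satisfies 1--3). The parenthetical claim that ``each excursion downward would again trigger $\ddot\phi>0$'' silently reuses the same uniform bound on a non-compact region, and the assertion that $\phi\to+\infty$ ``monotonically in the tail'' is neither needed nor justified. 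To be fair, the paper itself does not spell this step out either (it defers to an adaptation of Proposition 4 of \cite{miri2}), but your sketch should be closed, e.g. as follows. For the potentials actually at stake ($V_{1},V_{2}>0$ in \eqref{pote}) one has $V\to\infty$ as $\phi\to-\infty$, so $V(\phi(t))\le 3H(t_0)^{2}$ confines the orbit to a half-line $[\phi_{\min},\infty)$, and then $[\phi_{\min},M]$ is compact and your convexity argument applies. In the abstract setting one can argue via $\eta=\lim H$: if $\eta=0$ then $V(\phi(t))\le 3H^{2}\to0$, and strict monotonicity of $V$ alone forces $\phi\to+\infty$; if $\eta>0$, then $\rho$ decays exponentially by \eqref{conss}, and writing \eqref{ems} as $\dot u=-3Hu+F$ with $u=\dot\phi$, $F=-V'(\phi)+\frac{4-3\gamma}{2}Q\rho$ and $-V'>0$, variation of constants with $3H\ge3\eta$ shows the negative part of $\dot\phi$ is integrable, so $\phi$ is bounded below and the compactness argument again finishes the proof.
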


\begin{proof}
Since $V\left(  \phi\right)  \geq0,$ it follows from (\ref{frie}) that $H$ is
never zero, thus it cannot change sign. Hence, $H$ is always non-negative if
$H\left(  t_{0}\right)  >0$. Furthermore, $H$ is decreasing in view of
(\ref{ray}), thus$\ H\left(  t\right)  \leq H\left(  t_{0}\right)  $, for all
$t\geq t_{0}$. We then deduce from (\ref{frie}) that each of the terms
$\rho,\frac{1}{2}\dot{\phi}^{2}$ and $V$ is bounded by $3H\left(
t_{0}\right)  ^{2}$. Since $H$ is decreasing, $\exists\lim_{t\rightarrow
+\infty}H=\eta\geq0$, therefore (\ref{ray}) implies that%
\begin{equation}
\frac{1}{2}\int_{t_{0}}^{+\infty}\left(  \dot{\phi}^{2}+\gamma\rho\right)
dt=H\left(  t_{0}\right)  -\eta<+\infty. \label{*}%
\end{equation}
In general, if $f$ is a non-negative function, the convergence of
$\int\nolimits_{t_{0}}^{\infty}f\left(  t\right)  dt$ does not imply that
$\lim_{t\rightarrow\infty}f\left(  t\right)  =0,$ unless the derivative of $f$
is bounded. In our case and setting $\lambda=\left(  4-3\gamma\right)  Q$,
\begin{align*}
\frac{d}{dt}\left(  \dot{\phi}^{2}+\gamma\rho\right)   &  =-6H\dot{\phi}%
^{2}-2\dot{\phi}V^{\prime}\left(  \phi\right)  -3\gamma^{2}\rho H+\lambda
\left(  1-\frac{\gamma}{2}\right)  \rho\dot{\phi}\\
&  \leq-2\dot{\phi}V^{\prime}\left(  \phi\right)  +\lambda\left(
1-\frac{\gamma}{2}\right)  \rho\dot{\phi}.
\end{align*}
As we already remarked, $\dot{\phi}$ and $\rho$ are bounded; also, by our
assumption on $V$, $V^{\prime}\left(  \phi\right)  $ is bounded. We conclude
that the derivative of the function $\dot{\phi}^{2}+\gamma\rho$ is bounded
from above and therefore, (\ref{*}) implies that $\lim_{t\rightarrow\infty
}\dot{\phi}\left(  t\right)  ^{2}=0\;$and$\;\lim_{t\rightarrow\infty}%
\rho\left(  t\right)  =0.$

The proof that, $\lim_{t\rightarrow+\infty}\phi=+\infty,$ follows after
suitable adaptation of the arguments used in Proposition 4 in \cite{miri2}.
\end{proof}

If in addition, $\lim_{\phi\rightarrow+\infty}V\left(  \phi\right)  =0,$ as is
the case of the double exponential potential (\ref{pote}), then we conclude
that $H\rightarrow0$ as $t\rightarrow\infty.$

The case $V_{1}>0,V_{2}<0,$ is more delicate and the asymptotic state depends
on the initial conditions. (i) If initially $\phi_{0}>\phi_{m},$ and
$3H\left(  t_{0}\right)  ^{2}<V_{\max},$ then from (\ref{frie}), $V\left(
\phi\right)  $ remains less than $V_{\max}$ since $H$ is decreasing. We
conclude that $V\left(  \phi\left(  t\right)  \right)  <V_{\max}$ for all
$t\geq t_{0}$, thus $\phi$ cannot pass to the left of $\phi_{m}$. In the
interval $\left(  \phi_{m},+\infty\right)  $ the potential satisfies the
assumptions of the above Proposition and therefore, $\phi\rightarrow\infty$ as
$t\rightarrow\infty.$ (ii) If initially $\phi_{0}<\phi_{m},$ and $\dot{\phi
}_{0}$ is larger than the critical value $\dot{\phi}_{\mathrm{crit}}>0$, which
allows for $\phi$ to pass on the right of $\phi_{m},$ then the conclusions of
case (i) hold. (iii) Finally, suppose that initially $\phi_{0}<\phi_{m},$ and
$\dot{\phi}_{0}$ is less than the critical value $\dot{\phi}_{\mathrm{crit}%
}>0,$ i.e., $-\infty<\dot{\phi}_{0}<\dot{\phi}_{\mathrm{crit}}.$ From
(\ref{ray}), $H$ is monotonically decreasing and not bounded below from zero,
hence eventually $H$ may change sign. We cannot use the same argument as in
Proposition \ref{prop1} concerning the asymptotic behavior of $\dot{\phi
}\left(  t\right)  ^{2}$ and$\;\rho\left(  t\right)  ,$ since $V$ and
$V^{\prime}$ are not bounded. Suppose, firstly, that $\lim_{t\rightarrow
+\infty}H=\eta,$ where $\eta\ $is finite. But, an asymptotic state of the
form, $\mathbf{p}=\left(  H=\eta,\rho=\rho_{\ast},\dot{\phi}=\dot{\phi}_{\ast
},\phi=\phi_{\ast}\right)  ,$ is impossible, i.e., the point $\mathbf{p}$
cannot be an equilibrium point of the dynamical system (\ref{ray}%
)-(\ref{conss}) for $\phi_{\ast}<\phi_{m}.$ Although we cannot exclude
periodic orbits, or strange attractors as $\omega-$limit sets for our system,
numerical experiments suggest that, $H$ diverges to $-\infty$. If this is the
case, it can be shown that $H$ diverges to $-\infty,$ in a finite time.
Suppose on the contrary that, $\lim_{t\rightarrow+\infty}H=-\infty.\ $Since
$\gamma<2$,%
\[
3H^{2}=\frac{\dot{\phi}^{2}}{2}+\rho+V(\phi)<\frac{\dot{\phi}^{2}+\gamma\rho
}{\gamma}+V(\phi)=-\frac{2\dot{H}}{\gamma}+V(\phi),
\]
hence,%
\begin{equation}
3<-\frac{2\dot{H}}{\gamma H^{2}}+\frac{V(\phi)}{H^{2}}. \label{ineq}%
\end{equation}
Taking limits as $t\rightarrow+\infty$, and since $V(\phi)$ is bounded from
above, $\lim_{t\rightarrow+\infty}V(\phi)/H^{2}\leq0$. Inequality (\ref{ineq})
implies that $\lim_{t\rightarrow+\infty}\left(  -\dot{H}/H^{2}\right)
\geq3\gamma/2$, which is impossible, since $-\dot{H}/H^{2}=d/dt\left(
1/H\right)  $ and $1/H\rightarrow0$. In view of (\ref{frie}), $\dot{\phi}%
^{2}+\gamma\rho\ $also diverges to infinity. Again, an asymptotic state of the
form, $H=-\infty,\dot{\phi}^{2}+\gamma\rho=\infty\ $and $\phi=$ finite is
impossible, therefore $\phi\ $diverges to $-\infty$ in a finite time. The
above arguments, supported by numerical investigation, establish the following
result, although we were unable to prove it rigorously:

\begin{proposition}
Let $V$ be a $C^{1}$ potential function with the following properties: 1. $V$
is negative and monotonically increasing for $\phi<0,$ with $\lim
_{\phi\rightarrow-\infty}V\left(  \phi\right)  =-\infty$. 2. $V$ has a global
maximum at some $\phi_{m}>0$ . Suppose that the following initial conditions
hold: $H\left(  t_{0}\right)  >0,$ $\phi\left(  t_{0}\right)  <\phi_{m},$ and
$-\infty<\dot{\phi}\left(  t_{0}\right)  <\dot{\phi}_{\mathrm{crit}},$ where
$\dot{\phi}_{\mathrm{crit}}>0,$ is the critical value which allows for $\phi$
to pass to the right of $\phi_{m}$. Then $H\ $and $\phi\ $diverge
to\ $-\infty\ $in a finite time.
\end{proposition}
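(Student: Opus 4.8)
The hypotheses are precisely case (iii) of the discussion preceding the statement, so the plan is to turn that discussion into a proof. Work on the maximal forward interval $[t_{0},T)$ of the solution of (\ref{frie})--(\ref{conss}) and set $M:=V_{\max}$. First I would record confinement and the basic monotonicities: by the very meaning of $\dot\phi_{\mathrm{crit}}$, the hypothesis $\dot\phi(t_{0})<\dot\phi_{\mathrm{crit}}$ keeps $\phi$ from ever reaching $\phi_{m}$, so $\phi(t)<\phi_{m}$ on $[t_{0},T)$; hence $V(\phi(t))<M$ and, there being no interior critical point of $V$ to the left of $\phi_{m}$, $V'(\phi(t))>0$ throughout. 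By (\ref{ray}), $H$ is non-increasing; by (\ref{frie}), $0\le 3H^{2}=\rho+\tfrac12\dot\phi^{2}+V$, so $\rho+\tfrac12\dot\phi^{2}=3H^{2}-V\ge -V$.

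The engine of the proof is two differential inequalities for $H$. Eliminating $\rho$ between (\ref{frie}) and (\ref{ray}) and using $0<\gamma<2$ gives, exactly as in (\ref{ineq}),
\[
\frac{d}{dt}\!\left(\frac1H\right)=-\frac{\dot H}{H^{2}}>\frac{\gamma}{2}\!\left(3-\frac{V(\phi)}{H^{2}}\right),
\]
and the same elimination retained in the form $-\dot H=\tfrac{2-\gamma}{4}\dot\phi^{2}+\tfrac{3\gamma}{2}H^{2}-\tfrac{\gamma}{2}V$ yields the Riccati bound $\dot H\le -\tfrac{3\gamma}{2}H^{2}+\tfrac{\gamma}{2}M$. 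From the first inequality together with $V\le M$: on any infinite interval on which $H\to-\infty$ one has $\limsup_{t\to\infty}V(\phi)/H^{2}\le 0$, so eventually $\tfrac{d}{dt}(1/H)\ge\gamma$, forcing $1/H\to+\infty$, a contradiction; hence $H$ cannot tend to $-\infty$ over an infinite time span. From the Riccati bound: if $H(t_{2})<-\sqrt{M/3}$ for some $t_{2}$, then $u:=H+\sqrt{M/3}$ satisfies $\dot u\le ku$ for a constant $k>0$ on $[t_{2},T)$, with $u(t_{2})<0$, so $H\to-\infty$. Combining the two: if $T=+\infty$ then $H$ never drops below $-\sqrt{M/3}$, hence is bounded and tends to a finite limit $\eta$; and if $T<+\infty$ then, since the state must leave every compact set while $H$ stays $\le H(t_{0})$ and every blow-up of $\rho,\dot\phi$ or $-V$ forces $H^{2}\to\infty$ through (\ref{frie}), necessarily $H\to-\infty$ as $t\to T^{-}$.

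Two things then remain. (a) If $T<+\infty$, I claim $\phi\to-\infty$ as $t\to T^{-}$: were $\phi$ to stay in a compact $[\phi_{*},\phi_{m})$, then $V,V'$ would be bounded along the orbit, integrating (\ref{conss}) and (\ref{ems}) would control $\rho$ and $\dot\phi$ on the finite interval $[t_{0},T)$ in terms of $\int H$, and this is inconsistent with $\rho+\tfrac12\dot\phi^{2}=3H^{2}-V\to+\infty$ together with the genuine singular behaviour $a\to0$ that $H\to-\infty$ entails; hence $\phi\to-\infty$ in the finite time $T$, which is the assertion. (b) It must be shown that the global case $T=+\infty$, $H\to\eta$ finite, does not occur. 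Here $\phi$ cannot tend monotonically to $-\infty$, for that would give $V\to-\infty$ and hence $-\dot H\ge\tfrac{\gamma}{2}(3H^{2}-V)\to+\infty$, contradicting boundedness of $H$; and if the whole orbit were bounded, its $\omega$-limit set $\Omega$ would be nonempty, compact and invariant, on which $H\equiv\eta$ (monotone convergence), hence $\dot H\equiv0$, hence by (\ref{ray}) $\dot\phi\equiv0\equiv\rho$, hence by (\ref{ems}) $V'(\phi)\equiv0$, so by confinement $\Omega$ reduces to a de Sitter equilibrium sitting at $\phi_{m}$ — and convergence to it should be excluded by combining the \emph{strict} inequality $\dot\phi(t_{0})<\dot\phi_{\mathrm{crit}}$, which keeps the orbit strictly to the left of $\phi_{m}$, with a linearization of that equilibrium on the Friedmann constraint surface (for the expanding de Sitter point this instability is already noted in the text).

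The main obstacle is the residual possibility left open by (b): a globally defined solution with $H\to\eta$ finite on which $\phi$ does not converge but oscillates with unbounded downward excursions — so the orbit is unbounded and $\Omega=\varnothing$ — or, more generally, an exotic $\omega$-limit set (a periodic orbit or strange attractor) on which $H$ stays bounded. Excluding these seems to require either a genuine Lyapunov function for (\ref{frie})--(\ref{conss}), which I do not see how to build because of the sign-indefinite coupling term, or a uniform lower bound on the width of the spikes in $-\dot H$ that such oscillations would force (equivalently, control of $\ddot H$ when $\rho$ is large), which also seems out of reach; this is the step that remains, at present, only numerically supported.
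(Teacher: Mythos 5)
Your attempt follows essentially the same route as the paper, and the ``remaining step'' you flag at the end is not something you missed: the authors themselves state, immediately before the proposition, that they ``cannot exclude periodic orbits, or strange attractors as $\omega$-limit sets,'' and that the result is only ``supported by numerical investigation, \dots although we were unable to prove it rigorously.'' In other words, the paper contains no complete proof of this statement, and its argument is exactly your skeleton: the impossibility of a finite-limit asymptotic state with $\phi_{\ast}<\phi_{m}$, the contradiction from (\ref{ineq}) showing that $H\rightarrow-\infty$ cannot happen over an infinite time span (hence any such divergence is in finite time), and then the divergence of $\dot{\phi}^{2}+\gamma\rho$ via (\ref{frie}) forcing $\phi\rightarrow-\infty$ as well. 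Your write-up actually sharpens the sketch in a couple of places the paper leaves implicit: the Riccati-type bound $\dot{H}\leq-\tfrac{3\gamma}{2}H^{2}+\tfrac{\gamma}{2}V_{\max}$ giving the dichotomy on the maximal interval, and the $\omega$-limit-set argument pinning a bounded, globally defined orbit with $H\rightarrow\eta$ to the de Sitter state at $\phi_{m}$. Two small cautions: the claim $V^{\prime}(\phi)>0$ for all $\phi<\phi_{m}$ does not follow from the hypotheses as stated (they only give monotonicity for $\phi<0$ and a global maximum at $\phi_{m}$), though nothing essential hinges on it; and step (a), the exclusion of $\phi$ remaining in a compact set while $H\rightarrow-\infty$ in finite time, is asserted rather than proved in both your text and the paper's, so it shares the heuristic status of the rest. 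The honest conclusion is that your proposal is as rigorous as the paper's own argument, and the genuinely open point you identify coincides with the gap the authors concede.
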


This result generalizes previous investigations indicating that negative
potentials may drive a flat initially expanding universe to recollapse, see
\cite{hh,ffkl,gma}. Negative potentials appear also in ekpyrotic models (see
for example \cite{bko} and references therein and \cite{kowa} with multiple fields).

The function (\ref{pote}) belongs to the class of multi-exponential potentials
of the form
\[
V\left(  \phi\right)  =\sum_{i=1}^{N}V_{i}e^{-k_{i}\phi},
\]
which arise as a special case of generalized models with multiple fields
studied in the context of assisted inflation (see for example \cite{coho}; for
an elegant mathematical generalization see \cite{cnr}). There exists a well
established mathematical procedure for the investigation of scalar field
cosmologies with exponential potentials in the context of dynamical systems
theory \cite{clw,wael}. It consists in the introduction of the so called,
expansion normalized variables by defining
\begin{equation}
x=\frac{\dot{\phi}}{\sqrt{6}H},~~y=\sqrt{\frac{V_{1}e^{-\alpha\phi}}{3H^{2}}%
},~~z=\sqrt{\frac{V_{2}e^{-\beta\phi}}{3H^{2}}},~~\Omega=\frac{\rho}{3H^{2}},
\label{env}%
\end{equation}
and a new time variable $\tau=\ln a.$ The Friedmann equation (\ref{frie})
imposes the constraint
\begin{equation}
\Omega=1-\left(  x^{2}+y^{2}+z^{2}\right)  , \label{cons1}%
\end{equation}
to the state vector $\left(  x,y,z,\Omega\right)  $. This equation can be used
to eliminate $\Omega$ from the evolution equations and we end up with a
three-dimensional dynamical system,%

\begin{align}
x^{\prime}  &  =\sqrt{6}Q-\frac{3}{2}\sqrt{\frac{3}{2}}\gamma Q+\left(
\frac{3\gamma}{2}-3\right)  x+\left(  \frac{3}{2}\sqrt{\frac{3}{2}}%
\gamma-\sqrt{6}\right)  Qx^{2}+\nonumber\\
&  +\left(  3-\frac{3\gamma}{2}\right)  x^{3}+\left(  \sqrt{\frac{3}{2}}%
\alpha-\sqrt{6}Q+\frac{3}{2}\sqrt{\frac{3}{2}}\gamma Q\right)  y^{2}%
+\nonumber\\
&  +\left(  \sqrt{\frac{3}{2}}\beta-\sqrt{6}Q+\frac{3}{2}\sqrt{\frac{3}{2}%
}\gamma Q\right)  z^{2}-\frac{3}{2}\gamma xy^{2}-\frac{3}{2}\gamma
xz^{2},\nonumber\\
y^{\prime}  &  =y\left(  \frac{3\gamma}{2}-\sqrt{\frac{3}{2}}\alpha x+\left(
3-\frac{3\gamma}{2}\right)  x^{2}-\frac{3\gamma}{2}y^{2}-\frac{3\gamma}%
{2}z^{2}\right)  ,\label{sys3}\\
z^{\prime}  &  =z\left(  \frac{3\gamma}{2}-\sqrt{\frac{3}{2}}\beta x+\left(
3-\frac{3\gamma}{2}\right)  x^{2}-\frac{3\gamma}{2}y^{2}-\frac{3\gamma}%
{2}z^{2}\right)  ,\nonumber
\end{align}
where
\begin{equation}
x^{2}+y^{2}+z^{2}\leq1, \label{const}%
\end{equation}
and a prime denotes derivative with respect to $\tau$. Note that $y$ and $z$
can take both real and pure imaginary values, depending on the signs of
$V_{i}$. With this choice we avoid to have four different dynamical systems
(see however \cite{hh} where real normalized variables are used). For
$V_{1},V_{2}>0$, the phase space (\ref{const}) is the closed unit ball in
$\mathbb{R}^{3}$. For $V_{1}>0$ and $V_{2}<0$, the phase space is the one
sheet hyperboloid $x^{2}+y^{2}-(\operatorname{Im}z)^{2}=1$ and its interior.
The resulting dynamical system depends on four parameters $(\gamma
,\alpha,\beta,Q)$. Using (\ref{ray}), the effective equation of state,
\[
w_{\mathrm{eff}}=-1-\frac{2\dot{H}}{3H^{2}},
\]
is written in terms of the new variables as,
\[
w_{\mathrm{eff}}=-1+2x^{2}+\gamma\Omega.
\]

\section{Cosmologically acceptable solutions}

By inspection, system (\ref{sys3}) is symmetric under reflection, with respect
to the planes $x-z$ and $x-y$. The planes $y=0$ and $z=0$ are invariant sets
for the system (\ref{sys3}). The full list and analysis of the critical points
of our system is presented in the Appendix. In this section, we discuss only
these equilibria which allow for a viable cosmological history of the
universe. In Table 1 are shown the equilibria for $V_{1}>0$ and
\[
\alpha<\sqrt{2},~\gamma\leq1,~\left(  4-3\gamma\right)  Q\in\left(
\max\left\{  0,2\left(  \alpha^{2}-3\gamma\right)  /\alpha\right\}  ,\sqrt
{6}\left(  2-\gamma\right)  \right)  .
\]

\begin{center}
\begin{table}[ptb]
\caption{Equilibrium Points}%
\begin{tabular}
[c]{lllll}\hline\hline
Label & $(x,y,z)$ & $\Omega$ & Stability & $a(t)$\\\hline
$\mathcal{A}_{\pm}$ & $\left(  \pm1,0,0\right)  $ & $0$ & Unstable & $t^{1/3}%
$\\
$\mathcal{B}$ & $\left(  \frac{\left(  4-3\gamma\right)  Q}{\sqrt{6}\left(
2-\gamma\right)  },0,0\right)  $ & $1-\frac{\left(  4-3\gamma\right)  ^{2}
Q^{2}}{6 \left(  2-\gamma\right)  ^{2} }$ & Saddle & $t^{4 \left(  2-
\gamma\right)  / \left(  6 \gamma\left(  2- \gamma\right)  + \left(  4-3
\gamma\right)  ^{2} Q^{2} \right)  }$\\
$\mathcal{C}_{\pm}$ & $\left(  \frac{\alpha}{\sqrt{6}},\pm\sqrt{1-\frac
{\alpha^{2}}{6}},0\right)  $ & $0$ & Stable & $t^{2/ \alpha^{2}}$\\
$\mathcal{D}_{\pm}$ & $\left(  0,\pm\sqrt{\frac{\beta}{\beta- \alpha}},
\pm\sqrt{\frac{\alpha}{\alpha- \beta}}\right)  $ & $0$ & Saddle & $e^{t}%
$\\\hline\hline
\end{tabular}
\end{table}
\end{center}

The two critical points $\mathcal{A}_{\pm}$ correspond to kinetic dominated
solutions which are unstable and are only expected to be relevant at early
times. Point $\mathcal{B}$ represents a type of scaling solution, i.e., the
kinetic energy density of the scalar field remains proportional to that of the
perfect fluid. Points $\mathcal{C}_{\pm}$ are accelerated only for $V_{1}>0$.
They correspond to scalar field dominated solutions which exist for
sufficiently flat potentials, $\alpha<\sqrt{6}.$ These are the same
conclusions as in \cite{gimi} for an exponential potential and $Q=\sqrt{2/3},$
and also in \cite{clw}, \cite{hh} and \cite{lizs} and in the case of a scalar
field non coupled to matter, although the ranges of the parameters
$(\alpha,\gamma)$ are different. Points $\mathcal{D}_{\pm}$ exist only in
models with $V_{1}>0,~V_{2}<0$. They correspond to the unstable state $\left(
\phi=\phi_{m},\dot{\phi}=0,\rho=0,H=\sqrt{V_{\max}/3}\right)  $ and represent
de Sitter solutions.

A successful cosmological model should comprise an accelerating solution as a
future attractor. It is evident that points $\mathcal{C}_{\pm}$, could satisfy
the condition for acceleration, $w_{\mathrm{eff}}<-1/3,$ provided that
$\alpha<\sqrt{2},$ (compare with the conclusions in \cite{clw}). From now on
we assume this range for the parameter $\alpha$. Moreover, the equilibria
$\mathcal{C}_{\pm}$, are stable for all physically interesting values of
$\gamma$. For a cosmological theory to be acceptable, it has to possess a
matter dominated epoch followed by a late time accelerated attractor. The
saddle character of point $\mathcal{B}$, implies that it represents a
transient phase and therefore, it is a good candidate for a matter point,
provided that $\Omega$ is close to one. This happens only for very small
values of the coupling parameter $Q$ and for $\gamma$ close to one. Another
way to see this, is the following. During the matter era, the scale factor has
to expand approximately as $a\sim t^{2/3}$. The scale factor near
$\mathcal{B}$ evolves as $a\sim t^{\frac{2}{3(w_{\mathrm{eff}}+1)}},$
therefore, $w_{\mathrm{eff}}$, has to be close to zero. As seen in Table 1,
$a\left(  t\right)  $ at $\mathcal{B}$, evolves as $t^{2/3}$ when $Q$ takes
the values%
\begin{equation}
Q=\frac{\sqrt{6\left(  2-\gamma\right)  \left(  1-\gamma\right)  }}{\left(
4-3\gamma\right)  },~\gamma\leq1. \label{Q}%
\end{equation}
Therefore, the realistic value $\gamma=1,$ corresponding to dust, is
incompatible to scalar field coupled to matter, i.e., the coupling parameter
$Q$ must be zero (see also \cite{aqtw}). On the other hand, (\ref{ems}) and
(\ref{conss}) imply that for $\gamma=4/3,$ the value of $Q$ is undetermined.
Below we summarize our results for the particular values $\gamma=1,4/3,2/3.$

A. Dust ($\gamma=1$). The critical points of our system are those of Table 1
for $\alpha<\sqrt{2},~\beta>\alpha,~Q=0$. Note that the future attractors
$\mathcal{C}_{\pm}$ have non phantom acceleration for every value of $\alpha$
in the interval $(0,\sqrt{2})$. A cosmologically acceptable trajectory should
pass near $\mathcal{B}$ and finally land on one of the points $\mathcal{C}%
_{\pm}$, depending on the initial conditions. Note that $\mathcal{A}_{\pm},$
$\mathcal{B}$ and $\mathcal{C}_{\pm}$ lie on the invariant plane $z=0$ and
$\mathcal{C}_{\pm}$ exist only in potentials with $V_{1}>0$. We consider the
projection of the system (\ref{sys3}) on that plane. The phase portrait is
shown in Figure 2 and is the same in both cases where the phase space is a
sphere ($V_{2}>0$), or a one sheet hyperboloid ($V_{2}<0$).

\begin{figure}[th]
\begin{center}
\includegraphics[scale=0.9]{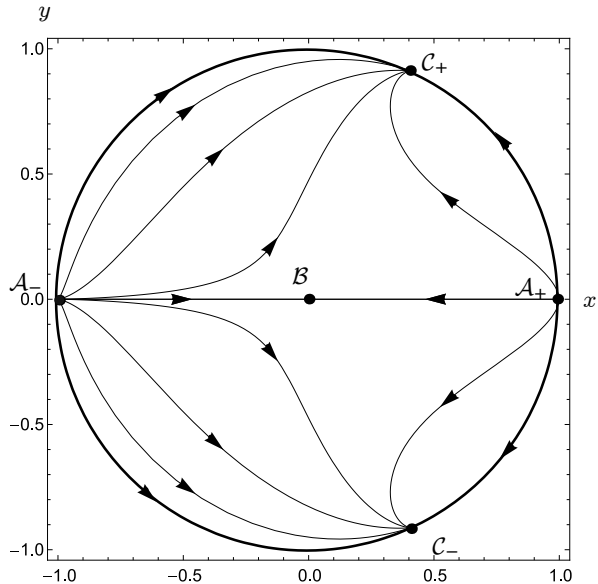}
\end{center}
\caption{Phase portrait of the projected three-dimensional system on the
invariant set $z=0$.}%
\label{fig2}%
\end{figure}

B. Radiation ($\gamma=4/3$). The case of $\gamma=4/3$ corresponds to
radiation, and therefore there is no matter point with a scale factor $a\sim
t^{2/3}$. Instead, point $\mathcal{B}$, which coincides with the origin
$(0,0,0)$, now represents the well-known radiation dominated solution, $a\sim
t^{1/2}$, as a transient phase. $\mathcal{C}_{\pm}$ are future attractors for
$\alpha< \sqrt{2}$.

C. The value $\gamma=2/3$ corresponds to ordinary matter marginally satisfying
the strong energy condition. Eq. (\ref{Q}) implies $Q=\sqrt{2/3}$. An
acceptable trajectory exists for $\alpha<\sqrt{2}$. For these values of
$\alpha$ and $Q$, points $\mathcal{A}_{\pm}$ are always unstable. Point
$\mathcal{B}\equiv\left(  1/2,0,0\right)  $, corresponds to the transient
matter era, with $\Omega=3/4$. The accelerated points $\mathcal{C}_{\pm}$ are
future attractors.

Throughout this paper we do not consider the case, $\alpha\beta<0,$ for the
potentials (\ref{pote}). The reason is that for $\alpha\beta<0,$ and
$V_{1},V_{2}>0,$ the function $V\left(  \phi\right)  $ in (\ref{pote}) has a
strictly positive minimum, say $V_{\min},$ and the de Sitter solution with
$H=\sqrt{V_{\min}/3}$, is the future attractor for the system, \cite{gimi}.
This follows directly either from the original equations (\ref{ray}%
)-(\ref{conss}), or from the system (\ref{sys3}) written in the new variables.
Moreover, it is easy to see that a matter era represented by a saddle
equilibrium $\mathcal{B}$, precedes the final accelerated epoch.

\section{Asymptotic form of some $f\left(  R\right)  $ theories predicting
acceleration}

A large class of dynamical dark energy models is based on the large-distance
modification of gravity (see \cite{fets} for recent reviews). For example, in
the context of $f\left(  R\right)  $ gravity theories the models
$f(R)=R-\mu^{2(n+1)}/R^{n}$, where $\mu>0,n>1$, were proposed to explain the
late-time cosmic acceleration \cite{ccct,cdtt}. The obvious idea is the
introduction of modifications to the Einstein-Hilbert Lagrangian which become
important at low curvatures. For these models the potential functions in the
Einstein frame have the form,%

\begin{equation}
V_{n}(\phi)=\frac{\mu^{2}(n+1)n^{1/(n+1)}(e^{\sqrt{2/3}\phi}-1)^{n/(n+1)}%
}{2ne^{2\sqrt{2/3}\phi}}. \label{pot}%
\end{equation}
These functions are defined only for $\phi\geq0,$ and their behavior is
similar to that indicated in Figure 1, i.e., they have a local maximum at some
$\phi_{m}$ depending on $n,$ and for large $\phi$ they approach zero
exponentially. As $n\rightarrow\infty$ the potentials (\ref{pot}) approach the function,%

\begin{equation}
V(\phi)=\frac{\mu^{2}}{2}\left(  e^{-\sqrt{2/3}\phi}-e^{-2\sqrt{2/3}\phi
}\right)  , \label{pot2}%
\end{equation}
corresponding to the asymptotic form of these theories, \cite{cdtt}. Thus,
(\ref{pot2}) is a particular case of (\ref{pote}) with $\beta=2\alpha
=2\sqrt{2/3}$, $V_{1}=-V_{2}=\mu^{2}/2>0$, cf. Figure 1. Note that for large
$\phi,$ $V$ in (\ref{pot2}) behaves similarly to $V_{n}$ in (\ref{pot}). In
contrast to the family (\ref{pot}), $V$ in (\ref{pot2}) has the nice property
that it is defined for all $\phi\in\mathbb{R}$. As mentioned in the
introduction, the coupling coefficient takes the value $Q=\sqrt{2/3}$,
regardless of the form of $f(R),$ \cite{apt}.

The constraint (\ref{const}) implies that the phase space is the set
$x^{2}+y^{2}-(\operatorname{Im}z)^{2}\leq1$. There are up to seven critical
points for that system, depending on the value of $\gamma$.

\begin{center}%
\begin{tabular}
[t]{llllll}\hline\hline
Label & $(x,y,z)$ & $\Omega$ & Existence & Stability & $a(t)$\\\hline
$\mathcal{A}_{\pm} $ & $(\pm1,0,0)$ & $0$ & always & unstable & $t^{1/3}$\\
$\mathcal{B} $ & $\left(  \frac{4-3 \gamma}{3 \left(  2- \gamma\right)  },0,0
\right)  $ & $\frac{4 \left(  5- 3\gamma\right)  }{9\left(  2- \gamma\right)
^{2}}$ & $\gamma\leq5/3$ & saddle & $t^{3 \left(  2- \gamma\right)  / \left(
8 -3 \gamma\right)  }$\\
$\mathcal{C}_{\pm} $ & $\left(  \frac{1}{3}, \pm\frac{2\sqrt{2}}{3},0 \right)
$ & $0$ & always & stable & $t^{3}$\\
$\mathcal{D}_{\pm}$ & $(0, \pm\sqrt{2}, \pm i) $ & $0$ & always & saddle &
$e^{t}$\\\hline\hline
\end{tabular}

\end{center}

Points $\mathcal{C}_{\pm}$ are future attractors and have non phantom
acceleration with $w_{\mathrm{eff}}=-7/9$. However, in the case of dust,
$\gamma=1$, the scale factor at matter point $\mathcal{B}$ evolves as $a\sim
t^{3/5},$ rather than the usual $a\sim t^{2/3}$. The scale factor evolves
\textquotedblleft correctly\textquotedblright\ only for $\gamma=2/3$. The
absence of the standard matter epoch is associated with the fact that matter
is strongly coupled to gravity. This result is in agreement with the general
conclusions in \cite{apt}, \cite{agpt}, \cite{amen2}, that these $f\left(
R\right)  $ dark energy models are not cosmologically viable.

\section{Conclusion}

In this paper we have focused on a general treatment of a scalar field with a
double exponential potential non minimally coupled to a perfect fluid. A full
analysis of the equilibrium points of the resulted dynamical system is quite
complicated, yet it revealed that the model predicts a late accelerated phase
of the universe for a wide range of the parameters, $\alpha,\beta,\gamma$ and
$Q$. Moreover, there exists transient solutions representing a matter era,
preceding the accelerating attractor. However, in most cases the scale factor
near these transient phases evolves as $a\left(  t\right)  \sim t^{q\left(
Q\right)  },$ where the exponent $q$ is in general different from the usual
$2/3.$ The \textquotedblleft wrong\textquotedblright\ matter epoch is
associated with the fact that for values of $Q$ of order unity, matter is
strongly coupled to gravity. A coupling constant of order unity means that
matter feels an additional scalar force as strong as gravity itself, cf.
\cite{apt}. Assuming that ordinary matter satisfies plausible energy
conditions, i.e., $\gamma\gtrsim1,$ the coupling constant, $Q,$ has to be very
small; more precisely, $q\left(  Q\right)  \rightarrow2/3$, only for
$Q\rightarrow0.$ Therefore, only a very weak coupling of the scalar field to
ordinary matter can lead to acceptable cosmological histories of the universe.
This surprising result, indicates that cosmological evolution imposes strict
constraints on the choice of the correct Lagrangian of a gravity theory. In
this study we restricted ourselves to constant couplings; had we let $Q$ to be
a function of $\phi$, the dimension of the dynamical system would have
increased by one. In that case, it would be very interesting to see if the
dynamics leads to a very tiny value of $Q$ at late times. Such a result could
lead to a generalization of the attractor mechanism of scalar-tensor theories
towards general relativity, found by Damour and Nordtvedt in the case of a
massless scalar field \cite{dano}.

\section*{Acknowledgements}

We thank N. Hadjisavvas and S. Cotsakis for useful comments.

\renewcommand{\theequation}{A.\arabic
{equation}}

\setcounter{equation}{0}

\section*{Appendix}

We present here the full analysis of the stability of the system (\ref{sys3}).
The critical points are listed in Table 2.

\begin{center}
\begin{table}[ptb]
\caption{Critical Points}%
\begin{tabular}
[c]{llll}\hline\hline
Label & $(x,y,z)$ & $\Omega$ & $w_{\text{eff}}$\\\hline
$\mathcal{A}_{\pm}$ & $\left(  \pm1,0,0 \right)  $ & $0$ & $1$\\
$\mathcal{B}$ & $\left(  \frac{\left(  4-3\gamma\right)  Q}{\sqrt{6}\left(  2-
\gamma\right)  },0,0\right)  $ & $1-\frac{\left(  4-3\gamma\right)  ^{2}Q^{2}%
}{6\left(  2- \gamma\right)  ^{2}} $ & $-1 + \gamma+ \frac{\left(  4-3
\gamma\right)  ^{2} Q^{2}}{6 \left(  2- \gamma\right)  }$\\
$\mathcal{C}_{\pm}$ & $\left(  \frac{\alpha}{\sqrt{6}},\pm\sqrt{1-
\frac{\alpha^{2}}{6}},0 \right)  $ & $0$ & $-1+\frac{\alpha^{2}}{3}$\\
$\mathcal{D}_{\pm}$ & $\left(  0,\pm\sqrt{\frac{\beta}{ \beta- \alpha}}
,\pm\sqrt{\frac{\alpha}{\alpha- \beta}} \right)  $ & $0$ & $-1$\\
$\mathcal{D^{\prime}}_{\pm}$ & $\left(  0,\pm\sqrt{\frac{\beta}{ \beta-\alpha
}},\mp\sqrt{\frac{\alpha}{\alpha- \beta}} \right)  $ & $0$ & $-1$\\
$\mathcal{E}_{\pm}$ & $\left(  u_{\alpha},\pm v_{\alpha},0\right)  $ &
$\omega_{\alpha}$ & $-1+ \sqrt{ \frac{2}{3}} \alpha u_{\alpha} $\\
$\mathcal{F}_{\pm}$ & $\left(  \frac{\beta}{\sqrt{6}},0,\pm\sqrt{1-
\frac{\beta^{2}}{6}} \right)  $ & $0$ & $-1+ \frac{ \beta^{2}}{3}$\\
$\mathcal{G}_{\pm}$ & $\left(  u_{\beta},0,\pm v_{\beta} \right)  $ &
$\omega_{\beta}$ & $-1+ \sqrt{ \frac{2}{3}} \beta u_{\beta} $\\\hline\hline
where & $u_{\alpha} = \frac{\sqrt{6}\gamma}{2\alpha-( 4- 3\gamma) Q},$ &  & \\
& $v_{\alpha} =\sqrt{\frac{\left(  4-3 \gamma\right)  ^{2} Q^{2} -2
\alpha\left(  4-3 \gamma\right)  Q +6 \gamma\left(  2- \gamma\right)
}{\left(  2\alpha- \left(  4-3\gamma\right)  Q\right)  ^{2}}},$ &  & \\
& $\omega_{\alpha} =\frac{2\left(  2\alpha^{2}-6\gamma-\alpha\left(
4-3\gamma\right)  Q\right)  }{\left(  2\alpha-\left(  4- 3\gamma\right)
Q\right)  ^{2}},$ &  & \\
& and similarly for $u_{\beta}, v_{\beta}, \omega_{\beta}$. &  & \\
&  &  &
\end{tabular}
\end{table}
\end{center}

We assume that $0<\alpha<\beta$. The case $0<\beta<\alpha$, is a mere renaming
of some of the equilibrium points. According to the definition (\ref{env}),
the modulus of $z$ lies between $0$ and the absolute value of $y$. Therefore,
points $\mathcal{D^{\prime}}_{\pm},~\mathcal{F}_{\pm}$ and $\mathcal{G}_{\pm}$
are not acceptable. The eigenvalues of the remaining equilibria are presented
in the Table 3.

\begin{center}
\begin{table}[ptb]
\caption{Eigenvalues}%
\begin{tabular}
[c]{ll}\hline\hline
Label & Eigenvalues\\\hline
$\mathcal{A}_{+}$ & $3-\sqrt{\frac{3}{2}}\alpha, ~3-\sqrt{\frac{3}{2}}\beta,~
6-3\gamma-\sqrt{\frac{3}{2}} \left(  4-3 \gamma\right)  Q$,\\
$\mathcal{A}_{-}$ & $3+\sqrt{\frac{3}{2}}\alpha, ~3+\sqrt{\frac{3}{2}}\beta,~
6-3\gamma+\sqrt{\frac{3}{2}} \left(  4-3 \gamma\right)  Q$,\\
$\mathcal{B}$ & $\frac{ \left(  4-3 \gamma\right)  ^{2} Q^{2} -2 \alpha\left(
4-3 \gamma\right)  Q +6 \gamma\left(  2 -\gamma\right)  }{4 \left(  2-
\gamma\right)  }, ~\frac{ \left(  4-3 \gamma\right)  ^{2} Q^{2} -2
\beta\left(  4-3 \gamma\right)  Q +6 \gamma\left(  2 -\gamma\right)  }{4
\left(  2- \gamma\right)  }, ~\frac{ \left(  4-3 \gamma\right)  ^{2} Q^{2} -6
\left(  2 -\gamma\right)  ^{2} }{4 \left(  2- \gamma\right)  }$\\
$\mathcal{C}_{\pm}$ & $\frac{\alpha^{2} -6}{2},~ \frac{\alpha\left(
\alpha-\beta\right)  }{2},~ \frac{2\alpha^{2} -6 \gamma- \alpha\left(  4-3
\gamma\right)  Q}{2}$\\
$\mathcal{D}_{\pm}$ & $\frac{1}{2} \left(  -3 \pm\sqrt{9+12 \alpha\beta}
\right)  ,~ -3 \gamma$\\
$\mathcal{E}_{\pm}$ & $\frac{3 \left(  \alpha-\beta\right)  \gamma}{2\alpha-
\left(  4-3 \gamma\right)  Q},~\frac{ \sigma\pm\sqrt{ \sigma^{2} -4 \delta}}{2
\left(  2\alpha- \left(  4-3 \gamma\right)  Q \right)  ^{2}} $\\\hline\hline
where & $\sigma= 3\left(  2\alpha-\left(  4-3\gamma\right)  Q\right)  \left(
\left(  4-3\gamma\right)  Q-\alpha\left(  2-\gamma\right)  \right)  ,$\\
and & $\delta= \frac{3}{2}\left(  2\alpha-\left(  4-3\gamma\right)  Q\right)
^{2}\left(  2\alpha^{2}-6\gamma-\alpha\left(  4-3\gamma\right)  Q\right)  $\\
& $\left(  \left(  4-3\gamma\right)  ^{2}Q^{2}-2\alpha\left(  4-3\gamma
\right)  Q+6\gamma\left(  2-\gamma\right)  \right)  .$%
\end{tabular}
\end{table}
\end{center}

As mentioned in the main text, a cosmologically acceptable trajectory passes
near a matter point and lands to an accelerated point. A critical point is a
good candidate for a matter point if it satisfies (i) the matter condition,
$\Omega>0$, (ii) the \textquotedblleft right\textquotedblright\ scale factor
condition, $a\sim t^{2/3}$, (or equivalently, $w_{\mathrm{{eff}}}$ close to
zero), and (iii) is a saddle point, i.e., represents a transient phase. On the
other hand, an acceptable late attractor has to be (iv) accelerated,
$w_{\mathrm{eff}}<-1/3$, and (v) stable. Points $\mathcal{B}$ and
$\mathcal{E}_{\pm}$ could be used as matter points and $\mathcal{B}%
,~\mathcal{C}_{\pm}$ and $\mathcal{E}_{\pm}$ could be used as accelerated
attractors. We are going to determine under which conditions on the parameters
$\alpha,~\beta,~\gamma$ and $Q$ there exist at the same time at least one
matter point with $w_{\mathrm{eff}}$ close to $1$, followed by at least one
accelerated future attractor.

\begin{itemize}
\item[$\mathcal{C_{\pm}}$] Following the terminology of \cite{hh}, these are
kinetic-potential scaling solutions and exist in potentials with $V_{1}>0$ for
$\alpha< \sqrt{6}$ and in potentials with $V_{1}<0$ for $\alpha> \sqrt{6}$.
They are stable and accelerated whenever
\begin{equation}
\left(  4-3 \gamma\right)  Q> \frac{2\left(  \alpha^{2} -3 \gamma\right)  }{
\alpha}~ \text{and} ~\alpha<\sqrt{2}. \label{pointC}%
\end{equation}
Hence, they are good candidates as accelerated late attractors only in
potentials with $V_{1}>0$.

\item[$\mathcal{E_{\pm}}$] Points $\mathcal{E_{\pm}}$ are
fluid-kinetic-potential scaling solutions (see also Ref. \cite{hh} for the
uncoupled case). They enter the phase space when%
\begin{equation}
\left(  4-3\gamma\right)  Q\leq\frac{2\left(  \alpha^{2}-3\gamma\right)
}{\alpha}, \label{phaseE1}%
\end{equation}
Points $\mathcal{E_{\pm}}$ may be used for the matter epoch if they satisfy
conditions (i), (ii) and (iii), i.e., if%
\begin{equation}
Q=2\alpha\frac{1-\gamma}{4-3\gamma},\gamma\leq1,~\alpha>\sqrt{\frac{3}{2}%
}\sqrt{\frac{2-\gamma}{1-\gamma}}. \label{Ematter}%
\end{equation}
In that case, points $\mathcal{E_{\pm}}$ exist only for potentials with
$V_{1}<0$. Hence, when $\mathcal{E_{\pm}}$ are used as matter points, points
$\mathcal{C_{\pm}}$ cannot be used as the accelerated attractors. The only
candidate left for the accelerated epoch is $\mathcal{B}$, but as we will see,
$\mathcal{B}$ cannot be accelerated for $Q$ given in (\ref{Ematter}). In order
for points $\mathcal{E_{\pm}}$ to be used for the accelerated epoch they have
to satisfy conditions (iv) and (v). This happens for
\begin{equation}
\left(  4-3\gamma\right)  Q<\left(  2-3\gamma\right)  \alpha~\text{and}\left(
4-3\gamma\right)  ^{2}Q^{2}-2\alpha\left(  4-3\gamma\right)  Q+6\gamma\left(
2-\gamma\right)  >0. \label{Evb}%
\end{equation}
Whenever $\mathcal{E}_{\pm}$ are accelerated attractors, the only remaining
candidate for the matter epoch is point $\mathcal{B}$, but as we shall see
right below, $\mathcal{B}$ does not satisfy the matter point conditions for
the range of the parameters given in (\ref{Evb}).

\item[$\mathcal{B}$] This is a fluid-kinetic scaled solution. Point
$\mathcal{B}$ enters the phase space when
\begin{equation}
Q\leq\sqrt{6}\frac{2-\gamma}{|4-3\gamma|}, \label{phaseB}%
\end{equation}
for $\gamma\neq4/3$ and lies always in the phase space for $\gamma=4/3$,
irrespective of the nature of the potential. For $\gamma<4/3$, condition
(\ref{phaseB}) is always satisfied for sufficiently small values of $Q$, e.g.,
$Q\lesssim1$. Matter point conditions (i), (ii) and (iii) are satisfied
whenever%
\begin{equation}
Q=\frac{\sqrt{6\left(  2-\gamma\right)  \left(  1-\gamma\right)  }}{4-3\gamma
},~\gamma\leq1,~\alpha<\sqrt{\frac{3}{2}}\sqrt{\frac{2-\gamma}{1-\gamma}}.
\label{Bmp}%
\end{equation}
On the other hand, point $\mathcal{B}$ may be an accelerated attractor if (iv)
and (v) hold, provided that (\ref{phaseB}) is satisfied. The condition for
acceleration (iv) gives%
\begin{equation}
Q<\frac{\sqrt{2\left(  2-\gamma\right)  \left(  2-3\gamma\right)  }}%
{4-3\gamma}, \label{accelB2}%
\end{equation}
with $\gamma<2/3$. Assuming (\ref{accelB2}), the stability condition, (v),
gives%
\[
\left(  4-3\gamma\right)  ^{2}Q^{2}-2\alpha\left(  4-3\gamma\right)
Q+6\gamma\left(  2-\gamma\right)  <0.\label{stabB2}%
\]
Nevertheless, $Q$ given in (\ref{Ematter}) do not satisfy (\ref{accelB2}).
Hence, matter points $\mathcal{E}_{\pm}$ cannot be combined with accelerated
attractors $\mathcal{B}$.
\end{itemize}

We conclude that there is only one case in which we have at the same time at
least one matter point and at least one accelerated attractor. This happens
whenever $\mathcal{B}$ represents the matter solution and $\mathcal{C}_{\pm}$
stand for attractors. In that case, potential has $V_{1}>0$ and the parameters
take the values%
\begin{equation}
\alpha<\sqrt{2},~\gamma\leq1,~Q=\frac{\sqrt{6\left(  2-\gamma\right)  \left(
1-\gamma\right)  }}{4-3\gamma}, \label{accept}%
\end{equation}
leading to Table 1 in the main text.

\end{document}